\documentclass[sigconf, nonacm]{acmart}

\usepackage{pvldb}
\usepackage{algorithm}
\usepackage{algpseudocode}

\renewcommand\vldbdoi{XX.XX/XXX.XX}
\renewcommand\vldbpages{XXX-XXX}
\renewcommand\vldbavailabilityurl{https://anonymous.4open.science/r/Einsummable-CE45}

\begin{document}
\title{Every Kernel Is a Join: Automatic Multi-GPU Parallelism for AI
Computations in Einsummable}

%% TODO: author list (double-blind placeholder for now).
\author{Zhimin Ding,
Chen-Kuan Liao,
Chima Adiole,
Brianna Barrow,
Fangzhou Du,
Yu Hsiao,
Ge Huang,
Yicheng Jin,
Ismail Syed, Chris Jermaine}
\affiliation{%
  \institution{Rice University}
  \city{Houston}
  \country{USA}
}
\email{\{zd21,bl86,ca31,bmb15,fd30,rh120,gh31,yj67,is38,cmj4\}@rice.edu}

\begin{abstract}
We present
Einsummable, a prototype system that accepts a PyTorch-like
description of an AI computation and automatically distributes it
across a multi-GPU server. Einsummable models every operation as a relational join followed by
an aggregation over tensor relations, in which the tuples contain
sub-tensors. Each operation exposes its possible decompositions
through \emph{join-agg specs}. An optimizer selects
decompositions across the whole computation to minimize a
communication-cost proxy. Because it searches decompositions rather
than a menu of strategies, Einsummable discovers plans that
mesh-based auto-parallelizers cannot. Each decomposed operation is
implemented by synthesizing an \emph{exchange program}, which is a
topology-aware generalization of Volcano's exchange operator.
Despite
being fully automatic, Einsummable can outperform custom-designed
implementations. For example, on LLaMA transformer blocks on an
eight-GPU A100 server, Einsummable achieves a geometric-mean runtime
of 8.97 ms, versus 13.65 ms for hand-tuned PyTorch and 14.87 ms for
vLLM.
\end{abstract}

\maketitle

%%% do not modify the following VLDB block %%
%%% VLDB block start %%%
\vldbtopmatter
%%% VLDB block end %%%

\section{Introduction}

Determining how to shard an AI computation across the GPUs of a multi-GPU server, so that adding GPUs actually reduces the time required to run the computation, is one of the central problems in systems-for-AI \cite{shoeybi2019megatron,jia2019flexflow,zheng2022alpa}. The difficulty is that distributing the operations that make up an AI computation (such as dot-product attention \cite{vaswani2017attention} or matrix multiplication) can incur significant communication overhead. This overhead can be so significant that it is easily possible to distribute a computation across a set of GPUs and find that adding additional GPUs \emph{increases} the running time.

In this paper, we describe a prototype AI system called Einsummable that accepts as input a PyTorch-like \cite{paszke2019pytorch} description of an AI computation and automatically distributes it across the GPUs of a multi-GPU server, with no human expertise required. Einsummable automatically optimizes the \emph{intra-operator parallelism} \cite{zheng2022alpa} of the computation. 
Each individual operation is decomposed so that it runs on all GPUs at once, ideally achieving a speedup close to linear in the number of GPUs. ``Intra-operator parallelism'' is a general class of parallel decompositions that includes data parallelism, tensor (model) parallelism \cite{shoeybi2019megatron,xu2021gspmd}, sequence parallelism \cite{korthikanti2023sequence}, sharded data parallelism \cite{rajbhandari2020zero}, and so on. Crucially, because Einsummable searches a space of decompositions rather than a menu of named strategies, it is not limited to parallelization schemes that have already been invented.\footnote{We do not consider multi-server computations, where the slow server-to-server interconnect means that throughput-enhancing methods such as pipeline parallelism \cite{huang2019gpipe} are more practical. A modern server can present 72 GPUs in a single NVLink domain (e.g., NVIDIA's GB200 NVL72) so single machines cover many use cases.} Optimizing for intra-operator parallelism is fundamentally a database problem. Intra-operator parallelism was the foundation of the shared-nothing parallel database systems of the 1980s and 1990s \cite{dewitt1992parallel}, and its automatic orchestration was the centerpiece of Volcano's exchange operator \cite{graefe1990encapsulation}. Einsummable brings key ideas from database systems to bear on AI workloads.

Einsummable leverages the fact that every operation in a modern AI computation (dot-product attention, softmax, layer normalization, element-wise operations such as the ReLU activation, matrix multiplication, and so on) can be modeled as a relational join followed by a relational aggregation over so-called \emph{tensor relations}, in the spirit of the tensor-relational model of computation \cite{yuan2021tensor}. A ``tensor relation'' is a set of (\texttt{sub\_tensor\_key}, \texttt{sub\_tensor}) pairs used to represent a tensor. For example, a distributed matrix multiply is precisely a join on the shared inner index followed by a \texttt{SUM} aggregation. In the degenerate, purely relational case, the 
\texttt{sub\_tensor\_key} identifies a single cell in the tensor and the \texttt{sub\_tensor} is a scalar value. However, no AI system that manipulates individual scalar values can be competitive for modern, dense AI computations. A competitive system must instead invoke efficient \emph{kernels}, routines (such as a cuBLAS matrix multiply or FlashAttention \cite{dao2022flashattention}) that run on an accelerator (GPU, TPU, etc.) over sub-tensors. 

Thus, the question addressed by Einsummable's design is: How to decompose each AI operation in a larger AI computation into joins and aggregations over tensor relations such that (a) each ``joined'' set of tuples represents enough work that a high-performance kernel can be applied effectively, and (b) the communication overhead introduced by the decomposition is minimized?  Einsummable brings two classical database ideas to the problem of multi-GPU AI computation: the separation of a logical specification from its
physical implementation, and cost-based optimization. 

Existing auto-parallelizing systems such as FlexFlow, GSPMD, and Alpa \cite{jia2019flexflow,xu2021gspmd,zheng2022alpa} search over per-operator sharding annotations drawn from a fixed vocabulary, in which tensor axes are mapped to a logical device mesh. More recent systems such as nnScaler \cite{lin2024nnscaler} let each operator declare which of its dimensions may be partitioned, but every dimension is still partitioned independent of the data. 
Unlike these systems, in Einsummable, each operation defines its own space of relational decompositions. The added expressivity is crucial. The classical 3D matrix-multiplication algorithm, for example, shards all three loops of a matrix multiply simultaneously, and therefore cannot be expressed over a two-dimensional device mesh. Or, consider dot-product attention over packed sequences: the sequence dimension may be split only at the boundaries between sequences, positions that depend on the data rather than on the tensor's shape. And under grouped-query attention, the query-heads dimension cannot be sharded independently, but inherits its sharding from the key/value heads. These constraints are properties of the operator and its input, not of the mesh.

Einsummable uses a three-part solution to the problem of automated decomposition.
First, every operation available to an Einsummable programmer (such as dot-product attention or matrix multiplication) must export the ability to produce what we call \emph{join-agg specs}. Each join-agg spec describes one way to decompose the operation's input tensors into tensor relations so that exactly $p$ tuples are produced by the join that is used to implement the operation, where $p$ is the number of processors; by varying its requests, the system enumerates the operation's decompositions.

Second, because the communication required to move the tuples of a tensor relation from GPU to GPU and to convert one tensor-relational representation into another can be very costly, Einsummable uses a dynamic programming algorithm to automatically choose the most effective set of decompositions for the overall computation, minimizing an estimate of the total number of bytes that must be communicated.

Finally, Einsummable physically realizes the decomposed computation using \emph{exchange programs}. An exchange program is a topology-aware generalization of Volcano's exchange operator \cite{graefe1990encapsulation}, in which all of the aggregation, repartitioning, and replication between producing and consuming operations is expressed in a domain-specific language. Einsummable synthesizes an exchange program for each operation, guided by a simulator of the actual hardware topology, producing a directed acyclic graph of kernel invocations and data-movement operations, which is compiled into a CUDA graph and executed on the GPU server. Notably, Einsummable never invokes a canned communication pattern such as an NCCL all-reduce. Every communication and aggregation pattern is special-purpose, derived during compilation for the specific data movement required.

% Over a suite of experiments measuring LLaMA \cite{grattafiori2024llama} block execution times on a DGX server with eight A100 GPUs, Einsummable, using fully automated decompositions, achieves a geometric mean running time of 8.97 ms, whereas hand-tuned PyTorch requires 13.65 ms and vLLM \cite{kwon2023vllm} (built on top of PyTorch) requires 14.87 ms.

In summary, this paper makes the following contributions:

\begin{itemize}
\item We show that the operations making up modern AI computations can be modeled as joins followed by aggregations over \emph{tensor relations}, giving a single relational framework that captures data, tensor, sequence, and other intra-operator parallelism strategies as special cases (Section~\ref{sec:tensors}).
\item We introduce \emph{join-agg specs} through which each operation exposes its valid decompositions (Section~\ref{sec:tensors}).
\item We present a dynamic-programming optimizer that selects decompositions for an entire computation so as to minimize total communication (Section~\ref{sec:logical}).
\item We introduce \emph{exchange programs} through which each decomposed operation is converted to a directed acyclic graph of kernel invocations and data movements, replacing canned collective libraries with communication synthesized per computation (Section~\ref{sec:physical}).
\item Experimental results show that Einsummable automatically discovers and implements decompositions whose performance meets and sometimes exceeds that of mature systems such as JAX \cite{bradbury2018jax} and PyTorch, even expertly tuned.  (Section~\ref{sec:experiments}).
\end{itemize}
\section{Einsummable Overview}
\label{sec:overview}

At the highest level, Einsummable allows a programmer to create highly-optim\-ized AI computations in a few lines of code.  Einsummable computations run in parallel on multiple AI accelerators without the programmer worrying about how the parallelism happens.  In this section, we give a simple Einsummable example from a programmer's view, and use it to give an overview of the compilation process. Consider a two-layer feed-forward network (FFN),
$\mathbf{Y} = \mathrm{ReLU}(\mathbf{X}\mathbf{W}_1)\mathbf{W}_2,$
where
\[
\mathbf{X} \in \mathbb{R}^{\texttt{tokens} \times \texttt{d\_model}},
\mathbf{W}_1 \in \mathbb{R}^{\texttt{d\_model} \times \texttt{d\_ff}},
\mathbf{W}_2 \in \mathbb{R}^{\texttt{d\_ff} \times \texttt{d\_model}},
\]
and consequently
\[
\mathbf{Y} \in \mathbb{R}^{\texttt{tokens} \times \texttt{d\_model}}.
\]
 
A programmer would construct this computation in Einsummable as
follows:
 
\begin{verbatim}
import einsummable as es
 
graph = es.createGraph()
 
x  = graph.insertTensor(
       es.tensorDesc([tokens, d_model], dtype))
w1 = graph.insertTensor(
       es.tensorDesc([d_model, d_ff], dtype))
w2 = graph.insertTensor(
       es.tensorDesc([d_ff, d_model], dtype))
 
act    = graph.matmul(x, w1)
hidden = graph.unary(act, "relu", 0.0)
y      = graph.matmul(hidden, w2)
 
graph.save(y)
es.compileGraph(graph)
\end{verbatim}
 
The two calls to \texttt{matmul} and the call to \texttt{unary} add
abstract operators to the graph (abstract operators are defined formally in Definition~\ref{def:abstract-operator} of Section~\ref{sec:tensors}). The call to \texttt{graph.save}
identifies an output that must remain available after execution. 
The call to \texttt{compileGraph} asks Einsummable
to map the abstract computation onto hardware. By default, Einsummable
queries the local CUDA runtime and compiles for every visible GPU; the
programmer may instead name a subset of devices, as in
\texttt{es.compileGraph(graph, device=[0, 1, 2, 3])}.
The program contains no device assignment for an
individual operator, no sharding annotations on an individual tensor,
and no explicit communication operations. The number of devices named
in this call is the processor count $p$ over which all join-agg specs
are formed. Join-agg specs are defined in Definition~\ref{def:join-agg-spec} of Section~\ref{sec:tensors}.  Every abstract operator can be asked to produce join-agg specs, which are formal specifications for the tensor-relational decompositions that can be used to implement the operator. During a logical optimization phase, Einsummable considers the join-agg specs available for each abstract operator and
chooses compatible decompositions across the entire FFN so as to minimize communication cost, as described
in Section~\ref{sec:logical}.
 
Once a logical plan has been produced, the compilation process then produces a data-movement schedule among the chosen
devices linking each abstract operator (specified as an \emph{exchange program}, described in Section \ref{sec:exprog}).
The default, \texttt{schedule\_mode="naive"}, uses a naive exchange program. Setting \texttt{schedule\_mode="ai"} asks an
optimizer, guided by a hardware-topology simulator, to search for a
lower-cost schedule of the same decomposed operators
(Section~\ref{sec:transformations}). After compilation, the programmer loads the input tensors (as
\texttt{NumPy} arrays), executes the graph, and offloads the result.

From the programmer's perspective, Einsummable resembles a
conventional tensor graph API. However, unlike other systems,
each operation denotes an abstract operator with many possible
join-agg implementations, and compilation selects devices,
decompositions, and a data-movement schedule without programmer involvement. The remainder
of the paper describes the machinery beneath the API, such as the tensor and
join-agg formalism (Section~\ref{sec:tensors}), the logical optimizer
that chooses decompositions (Section~\ref{sec:logical}), and the
physical optimizer that synthesizes exchange programs
(Section~\ref{sec:physical}).

\section{Tensors and Operators}
\label{sec:tensors}

In this section, we describe the formalisms that underlie our Einsummable implementation.

\subsection{Tensors in Einsummable}

First, we define the notion of \emph{tensors} and \emph{tensor types}.
 
\begin{definition}[Tensor type; tensor]
\label{def:tensor-type}
A \emph{tensor type} $\mathcal{T}$ is a set of functions from index vectors
to real numbers, and is specified by three quantities: a rank
$\mathcal{T}_r$ (a non-negative integer; ``r'' stands for ``rank'', and
matrices are rank-2 tensors), an \emph{upper-bound} vector $\mathcal{T}_u$,
and a \emph{lower-bound} vector $\mathcal{T}_l$, where the bound vectors are
integer vectors of length $\mathcal{T}_r$ satisfying
$\mathcal{T}_l[j] \le \mathcal{T}_u[j]$ for each $j$. Define
$\mathcal{I}(\mathcal{T})$ to be the set
\begin{multline*}
\{\mathcal{T}_l[0], \ldots, \mathcal{T}_u[0]-1\} \times
\{\mathcal{T}_l[1], \ldots, \mathcal{T}_u[1]-1\} \\
\times \cdots \times
\{\mathcal{T}_l[\mathcal{T}_r-1], \ldots, \mathcal{T}_u[\mathcal{T}_r-1]-1\}.
\end{multline*}
This is the set of all indices or keys that obey the upper and lower bounds
of $\mathcal{T}$ (lower bounds are inclusive, upper bounds are exclusive).
$\mathcal{T}$ is then the set of all functions from
$\mathcal{I}(\mathcal{T})$ to the set of real numbers, and a \emph{tensor}
(denoted with an upper-case bold letter such as $\mathbf{U}$) having tensor
type $\mathcal{T}$ is simply one of the functions in $\mathcal{T}$. That is,
we say $\mathbf{U}$ has tensor type $\mathcal{T}$ if
$\mathbf{U} \in \mathcal{T}$.
\end{definition}
 
\emph{Zero-bounded} tensors are those whose tensor type has a lower-bound
vector of all zeros. Einsummable users manipulate zero-bounded tensors, but
internally Einsummable tensor types may have non-zero lower bounds, as it
allows the system to describe/manipulate sub-tensors more easily. 
For human interpretability, zero-bounded tensor types are often specified
using named indices. A named index is a textual string (such as
\texttt{num\_rows} or \texttt{num\_cols}) that corresponds to an upper bound
(for example, we may have $\texttt{num\_rows} = 256$ and
$\texttt{num\_cols} = 512$). Then when we specify a tensor type as
$[\texttt{num\_rows}, \texttt{num\_cols}]$, it corresponds to the tensor
type $\mathcal{T}$ with $\mathcal{T}_r = 2$,
$\mathcal{T}_l = \langle 0, 0 \rangle$, and
$\mathcal{T}_u = \langle 256, 512 \rangle$.

We decompose computations over tensors so that they can be run in parallel.  This is done by decomposing the original computation into smaller computations whose input and output types tile the input and output types to the original computation.

\begin{definition}[Tiling]
Given a tensor type $\mathcal{T}$, we say that the set of tensor types $T = \{\mathcal{T}_1$, $\mathcal{T}_2, \cdots\}$ \emph{tiles} $\mathcal{T}$ if (a) the indices of the tensor types in $T$ cover the indices of $\mathcal{T}$; that is, $\mathcal{I}(\mathcal{T}) =$ $\bigcup_i \mathcal{I} (\mathcal{T}_i)$ and (b) the indices of the tensor types in $T$ are pairwise disjoint; that is, for all $i \neq j$, it is the case that $\mathcal{I}(\mathcal{T}_i) \cap \mathcal{I}(\mathcal{T}_j) = \varnothing$.
\end{definition}

\subsection{Abstract Operators and Join-Agg Specs}

Einsummable programmers build computations using abstract operators over tensors. Abstract operators are operations such as matrix multiplication, flash attention, softmax, and so on.  To a programmer, the actual implementation of the abstract operator (how its inputs are sharded, how it is parallelized) is invisible.  In this subsection we formally define the notion of an abstract operator, as well as how they are implemented.

\begin{definition}[Abstract operator]
\label{def:abstract-operator}
An \emph{abstract operator} in Einsummable is a function over zero-bounded
tensors. Given a list of zero or more zero-bounded input tensor types
$\mathcal{T}^{(i_1)}, \mathcal{T}^{(i_2)}, \ldots$, and a list of zero or
more zero-bounded output tensor types
$\mathcal{T}^{(o_1)}, \mathcal{T}^{(o_2)}, \ldots$, an abstract operator is
a function from the set
$\mathcal{T}^{(i_1)} \times \mathcal{T}^{(i_2)} \times \cdots$ to the set
$\mathcal{T}^{(o_1)} \times \mathcal{T}^{(o_2)} \times \cdots$.
\end{definition}
 
\begin{example}
\label{ex:matmul-signature}
Consider matrix multiplication. If we have named indices \texttt{i},
\texttt{j}, and \texttt{k}, then matrix multiply is a function from
$[\texttt{i}, \texttt{j}] \times [\texttt{j}, \texttt{k}]$ to
$[\texttt{i}, \texttt{k}]$.
\end{example}

Every abstract operator can be decomposed and implemented as a \emph{join}
followed by an \emph{aggregation}, in a way that is described using a
\emph{join-agg spec}. We first need three auxiliary notions.
 
\begin{definition}[Projection; function specification; concrete operator]
\label{def:projection}
We say that a tensor type $\hat{\mathcal{T}}$ is a \emph{projection} of a
tensor type $\mathcal{T}$ if $\hat{\mathcal{T}}_r = \mathcal{T}_r$ and,
componentwise, $\mathcal{T}_l \le \hat{\mathcal{T}}_l$ and
$\hat{\mathcal{T}}_u \le \mathcal{T}_u$; note that then
$\mathcal{I}(\hat{\mathcal{T}}) \subseteq \mathcal{I}(\mathcal{T})$. Given a
tensor $\mathbf{U} \in \mathcal{T}$, the \emph{projection of $\mathbf{U}$
along $\hat{\mathcal{T}}$}, denoted $\hat{\mathbf{U}}$, is the restriction
of $\mathbf{U}$ to the domain $\mathcal{I}(\hat{\mathcal{T}})$. A
\emph{function specification} is an expression of the form
$\mathcal{F} = \hat{\mathcal{T}}^{(i_1)} \times \hat{\mathcal{T}}^{(i_2)}
\times \cdots \rightarrow \hat{\mathcal{T}}^{(o_1)} \times
\hat{\mathcal{T}}^{(o_2)} \times \cdots$, that is, a list of input tensor
types together with a list of output tensor types. Finally, a
\emph{concrete operator} $f$ is a polymorphic function over tensors; we
write $f[\mathcal{F}]$ for the instantiation of $f$ at function
specification $\mathcal{F}$, which is a function from
$\hat{\mathcal{T}}^{(i_1)} \times \hat{\mathcal{T}}^{(i_2)} \times \cdots$
to $\hat{\mathcal{T}}^{(o_1)} \times \hat{\mathcal{T}}^{(o_2)} \times
\cdots$.
\end{definition}

Imagine that we decompose one or more of the tensors that are input into an abstract operator, so that each such input tensor is represented as a set of sub-tensors---that is, as a set of (\texttt{sub\_tensor\_key}, \texttt{sub\_tensor}) pairs, where each key is the tensor type of the corresponding sub-tensor (a \emph{tensor relation} \cite{yuan2021tensor}).  The abstract operator is then implemented in three phases.  If the abstract operator has $n$ inputs, the first phase is a \emph{join} where $n$-tuples of input tensors are formed from the decomposed inputs, and those $n$-tuples are sent to the abstract operator's corresponding concrete operator $f$.  Each invocation of $f$ happens on a different processor (or AI accelerator). The second phase is an \emph{aggregation} where some of the outputs of the various $f$ invocations may need to be combined.  And the final phase is the \emph{recompose}, where the various partial results are stitched together to form the abstract operator's final output.  We now formally define a join-agg spec and its implementation.
 
\begin{definition}[Join-agg spec]
\label{def:join-agg-spec}
Consider an abstract operator of the form
$\mathcal{T}^{(i_1)} \times \mathcal{T}^{(i_2)} \times \cdots \rightarrow
\mathcal{T}^{(o_1)} \times \mathcal{T}^{(o_2)} \times \cdots$. A
\emph{join-agg spec} for this abstract operator is a set
$\mathcal{J} = \{\mathcal{F}_1, \ldots, \mathcal{F}_p\}$ of function
specifications in which each input type $\hat{\mathcal{T}}^{(i_j)}$ is a
projection of $\mathcal{T}^{(i_j)}$ and each output type
$\hat{\mathcal{T}}^{(o_j)}$ is a projection of $\mathcal{T}^{(o_j)}$, and
which satisfies the following \emph{tiling condition}: for each output
index $j$, the distinct $j$th output types appearing in
$\mathcal{J}$ tile
$\mathcal{T}^{(o_j)}$. When a join-agg spec is executed, each
of its $p$ function specifications is assigned to its own processor.
\end{definition}
 
\begin{definition}[Implementation]
\label{def:implements}
We say that a join-agg spec $\mathcal{J}$ \emph{implements} an abstract
operator if there exist a concrete operator $f$ and a binary, associative,
and commutative \emph{aggregation operation} $\oplus$ over the reals
(applied elementwise to tensors of a common type) such that, for every
input to the abstract operator, the following three-step procedure recovers the
correct output:
\begin{enumerate}
\item \textbf{Join.} For each $\mathcal{F} \in \mathcal{J}$: project each
input tensor $\mathbf{U}^{(i_j)}$ along the corresponding
$\hat{\mathcal{T}}^{(i_j)}$ from $\mathcal{F}$ to obtain
$\hat{\mathbf{U}}^{(i_j)}$, then execute $f[\mathcal{F}]$ on
$\hat{\mathbf{U}}^{(i_1)}, \hat{\mathbf{U}}^{(i_2)}, \ldots$ to obtain
$\hat{\mathbf{U}}^{(o_1)}, \hat{\mathbf{U}}^{(o_2)}, \ldots$, where each
$\hat{\mathbf{U}}^{(o_j)} \in \hat{\mathcal{T}}^{(o_j)}$.
\item \textbf{Aggregate.} For each output index $j$ and each distinct type
$\hat{\mathcal{T}}$ appearing as a $j$th output type in $\mathcal{J}$,
combine using $\oplus$ the $j$th outputs of all function specifications
whose $j$th output type is $\hat{\mathcal{T}}$, yielding a single tensor
$\hat{\mathbf{V}} \in \hat{\mathcal{T}}$.
\item \textbf{Recompose.} For each output index $j$, the tiling condition
ensures that there is a unique tensor
$\mathbf{V}^{(o_j)} \in \mathcal{T}^{(o_j)}$ that agrees with each
aggregated tensor from step (2) on that tensor's index set; this
$\mathbf{V}^{(o_j)}$ is the $j$th output.
\end{enumerate}
\end{definition}
 
\begin{example}[Matrix multiply, decomposed]
\label{ex:matmul-join-agg}
Let $\texttt{i} = \texttt{k} = 256$ and $\texttt{j} = 512$, and consider
the matrix-multiply abstract operator of
Example~\ref{ex:matmul-signature}, from
$[\texttt{i}, \texttt{j}] \times [\texttt{j}, \texttt{k}]$ to
$[\texttt{i}, \texttt{k}]$. (This may be read as the first matrix
multiply of the FFN of Section~\ref{sec:overview}, with
$\texttt{i} = \texttt{tokens}$, $\texttt{j} = \texttt{d\_model}$, and
$\texttt{k} = \texttt{d\_ff}$.) A simple join-agg spec for this
operator splits the shared index \texttt{j} into two halves. Let
$\mathcal{J} = \{\mathcal{F}_1, \mathcal{F}_2\}$, where
$\mathcal{F}_1 = \mathcal{A} \times \mathcal{B} \rightarrow \mathcal{C}$
and
$\mathcal{F}_2 = \mathcal{A}' \times \mathcal{B}' \rightarrow \mathcal{C}$,
where $\mathcal{A}$ and $\mathcal{A}'$ are the left and right halves
(columns $[0, 256)$ and $[256, 512)$) of the first input type,
$\mathcal{B}$ and $\mathcal{B}'$ are the top and bottom halves of the
second, and $\mathcal{C}$ is the entire output type
$[\texttt{i}, \texttt{k}]$. Each of these is a projection of the
corresponding input or output type of the abstract operator, and since
$\mathcal{C}$ is the only output type appearing in $\mathcal{J}$, with
$\mathcal{I}(\mathcal{C}) = \mathcal{I}([\texttt{i}, \texttt{k}])$, the
tiling condition holds trivially.
 
This join-agg spec implements matrix multiply. Take $f$ to be the
polymorphic block matrix multiply, which for a function specification
$\hat{\mathcal{A}} \times \hat{\mathcal{B}} \rightarrow \hat{\mathcal{C}}$
multiplies the two projected blocks, contracting over the index range
of $\hat{\mathcal{A}}$'s second dimension, and take $\oplus$ to be addition. In step (1) of
Definition~\ref{def:implements}, the processor assigned $\mathcal{F}_1$
computes the partial products over the first half of \texttt{j}, and the
processor assigned $\mathcal{F}_2$ over the second half. In step (2), the
two results have the identical output type $\mathcal{C}$ and are summed
elementwise. Step (3) is not necessary, as $\mathcal{C}$ is the only output
block.
 
We can further decompose so as to require recomposition. Additionally split
\texttt{i} into two halves. The resulting join-agg spec has four function
specifications $\mathcal{F}_{m,n}$ for $m, n \in \{1, 2\}$, where
$\mathcal{F}_{m,n}$ multiplies the block of the first input with rows in
the $m$th half of \texttt{i} and columns in the $n$th half of \texttt{j}
by the block of the second input with rows in the $n$th half of
\texttt{j}, and has the single output type $\mathcal{C}_m$, with
$(\mathcal{C}_1)_l = \langle 0, 0 \rangle$,
$(\mathcal{C}_1)_u = \langle 128, 256 \rangle$,
$(\mathcal{C}_2)_l = \langle 128, 0 \rangle$, and
$(\mathcal{C}_2)_u = \langle 256, 256 \rangle$. Now two distinct output
types appear in the spec. In
step (2), the outputs of $\mathcal{F}_{m,1}$ and $\mathcal{F}_{m,2}$ are
summed to produce the aggregated block of type $\mathcal{C}_m$, and in
step (3) the two aggregated blocks, which tile the output index set, are
stitched into the final output tensor.
\end{example}

\subsection{Shardable Indices}

One or more of the named indices in the specification for an abstract operator are \emph{shardable}.  Shardable indices are used by Einsummable to request join-agg specs from an abstract operator.  Every shardable index will either have an associated list of positions that it can be sharded on, or else it will have an annotation \texttt{any} indicating that it can be sharded at any position.  To request a join-agg spec from an abstract operator, Einsummable decomposes each shardable index into one or more intervals and provides the intervals to the abstract operator, which returns a corresponding join-agg spec.  If $\texttt{intervals}[s]$ is the number of intervals associated with the $s$th shardable index, then the returned join-agg spec must contain exactly $\prod_s \texttt{intervals}[s]$ function specifications, one for each combination of intervals.
Crucially, since the abstract operator is itself responsible for building the join-agg spec, there is a great deal of flexibility in how operators can be decomposed.  

\begin{example}[Matrix multiply]
\label{ex:matmul-sharding}
For the matrix multiply of Example~\ref{ex:matmul-signature}, all three
named indices \texttt{i}, \texttt{j}, and \texttt{k} are shardable with
the annotation \texttt{any}, since a matrix multiply may be decomposed
between any two rows or columns of its inputs. Now let
$\texttt{i} = \texttt{k} = 256$ and $\texttt{j} = 512$. Suppose that
during optimization, Einsummable decomposes \texttt{i} into the intervals
$[0, 128)$ and $[128, 256)$, \texttt{j} into the intervals
$[0, 256)$ and $[256, 512)$, and leaves \texttt{k} as the single interval
$[0, 256)$. Since $\texttt{intervals}[\texttt{i}] = 2$,
$\texttt{intervals}[\texttt{j}] = 2$, and
$\texttt{intervals}[\texttt{k}] = 1$, the abstract operator must return a
join-agg spec containing $2 \cdot 2 \cdot 1 = 4$ function specifications, and it returns the join-agg spec
$\{\mathcal{F}_{m,n}\}$ of Example~\ref{ex:matmul-join-agg}, where
$\mathcal{F}_{m,n}$ corresponds to the $m$th interval of \texttt{i} paired
with the $n$th interval of \texttt{j}. Note the
differing roles of the indices. \texttt{j} indexes the contracted
dimension, so specifications that share an \texttt{i} interval but differ
in their \texttt{j} interval are combined by
$\oplus = +$ during aggregation, whereas the decomposition of \texttt{i}
passes through to an output tiling.
\end{example}

\begin{example}[Flash attention]
\label{ex:flash-sharding}
Consider flash-attention with grouped-query
attention, with input types
$[\texttt{heads\_q}$, \texttt{sequence}, $\texttt{d\_head}]$ (the queries),
$[\texttt{heads\_kv}$, \texttt{sequence}, $\texttt{d\_head}]$ (the keys),
and $[\texttt{heads\_kv},$ \texttt{sequence}, $\texttt{d\_head}]$ (the
values), and with output type
$[\texttt{heads\_q}, \texttt{sequence}, \texttt{d\_head}]$. Let
$\texttt{heads\_q} = 32$ and $\texttt{heads\_kv} = 8$, so that query
heads $4m, \ldots, 4m + 3$ are associated with the key-value head $m$. Of these
named indices, only \texttt{heads\_kv} and \texttt{sequence} are
shardable.
\texttt{heads\_q} is \emph{not} directly shardable, yet sharding
\texttt{heads\_kv} has the effect of sharding \texttt{heads\_q}. If
Einsummable decomposes \texttt{heads\_kv} into its eight unit intervals,
the abstract operator returns a join-agg spec with eight function
specifications, where the $m$th specification projects the key and value
inputs onto key-value head $m$ and the query input and
output onto the query-head interval $[4m, 4(m+1))$. This
logic is handled by the abstract operator when it produces the
join-agg spec. This is why join-agg spec production is left to individual abstract operators. Decomposition logic can be operator specific and the
Einsummable system does not need to directly reason about them.

The index \texttt{sequence} is an example of a shardable index without the \texttt{any} annotation. Suppose $\texttt{sequence} = 4096$
holds three independent sequences packed end to end, ending at positions
$1024$, $2560$, and $4096$. Attention never crosses a sequence boundary,
but positions within a single sequence will attend to one another, so
\texttt{sequence} may be sharded only at the ends of individual
sequences. 
% Finally, note that in every
% join-agg spec above, the output types are fully distinct blocks that tile the
% output index set. There is no overlap, so no
% aggregation phase is required. This is in contrast to matrix multiplication. Flash
% attention is decomposed into a pure join and recompose.
\end{example}

\section{Logical Optimization}
\label{sec:logical}

\subsection{Overview}

Recall from Section~\ref{sec:overview} that compilation proceeds in two phases, which roughly correspond to logical and physical optimization in classical database systems. This section describes logical optimization, where Einsummable associates a join-agg spec with each abstract operator, chosen so as to minimize the communication associated with the computation. The result is a logical plan.

\subsection{Choosing Join-Agg Specs}

A computation is a directed, acyclic graph where vertices are abstract operators, and there is an edge $(\texttt{A}, \texttt{B})$ if abstract operator \texttt{B} consumes the output of \texttt{A}.  For the time being we assume that each output is consumed only once, and that each operator produces only a single output.
Logical optimization is powered by a dynamic programming algorithm
operating over tilings. Let $\texttt{Opt}(\texttt{Op}, T)$ denote the lowest
communication cost possible for computing the output of abstract operator
\texttt{Op}, subject to the constraint that the tiling of the output is
$T$.
Assume we have a cost function
$\texttt{Cost}(\mathcal{J},$ $T_1, T_2, \cdots)$ that computes the
communication cost required to implement join-agg spec $\mathcal{J}$ given
that the first input has tiling $T_1$, the second $T_2$, and so on. This
includes the cost of repartitioning each input from its tiling into the
input projections required by $\mathcal{J}$, as well as the cost of the
join and the aggregation; we describe the details
 subsequently. Further, for a join-agg spec $\mathcal{J}$, let
$\texttt{out}(\mathcal{J})$ denote the tiling of the output induced by
$\mathcal{J}$ (that is, the set of distinct output types appearing in
$\mathcal{J}$ that tiles the output).

A key observation is that in $\texttt{Opt}(\texttt{Op}, T)$, the tiling
$T$ need only range over the output tilings that \texttt{Op}'s join-agg
specs actually produce. It is never better to first
repartition an output into some intermediate tiling and then repartition
again at the consumer than it is to repartition directly. Consequently,
nothing is lost by restricting the entries of the dynamic programming
table for \texttt{Op} to the tilings
$\{ \texttt{out}(\mathcal{J}) : \mathcal{J} \in
\texttt{specs}(\texttt{Op}) \}$. Note that this restriction does not determine
the join-agg spec as several specs may produce the same output tiling. For example, in
the matrix multiply of Example~\ref{ex:matmul-sharding},
specs that differ only in how the contracted index \texttt{j} is
decomposed all induce the same tiling of the output.

Consider an abstract operator \texttt{Op} whose inputs are produced by
abstract operators $\texttt{Op}_1, \ldots, \texttt{Op}_n$, and let
$\texttt{specs}(\texttt{Op})$ denote the set of join-agg specs over $p$
processors that \texttt{Op} can produce. This set is obtained by
requesting one join-agg spec from \texttt{Op} for each decomposition of
its shardable indices into intervals whose counts multiply to $p$. The
recurrence is
\begin{multline*}
\texttt{Opt}(\texttt{Op}, T) \;=\;
\min_{\substack{\mathcal{J} \in \texttt{specs}(\texttt{Op})
\,:\, \texttt{out}(\mathcal{J}) = T \\
T_1, \ldots, T_n}}
\Big[\,
\sum_{k=1}^{n} \texttt{Opt}(\texttt{Op}_k, T_k) \\
\;+\; \texttt{Cost}(\mathcal{J}, T_1, \ldots, T_n)
\,\Big],
\end{multline*}
where each $T_k$ ranges over the candidate tilings of the corresponding
producer $\texttt{Op}_k$. That is, to produce the output of \texttt{Op}
tiled according to $T$, choose a join-agg spec $\mathcal{J}$ for
\texttt{Op} that induces the tiling $T$, and choose a tiling $T_k$ for
each input; pay each producer $\texttt{Op}_k$ to deliver its output tiled
according to $T_k$, and pay to execute $\mathcal{J}$ on inputs that are so tiled
(including the cost of repartitioning each input from $T_k$ into the
blocks that $\mathcal{J}$ requires). The base case is a
tensor that is an input to the computation. Such a tensor may be
materialized across the processors in any tiling, so
$\texttt{Opt}(\texttt{Op}, T) = 0$ for every $T$ when \texttt{Op} is an
input. 

The cost of
the best logical plan is then
$\min_T \texttt{Opt}(\texttt{Op}^{*}, T)$, where $\texttt{Op}^{*}$ is the
operator producing the computation's designated output tensor, and the
plan itself (a choice of join-agg spec for every abstract operator)
is recovered by tracing back the minimizing choices.

%Two observations make the algorithm practical. First, since every output
%is consumed exactly once, the computation is a tree rooted at
%$\texttt{Op}^{*}$; the subproblems $\texttt{Opt}(\texttt{Op}_k, \cdot)$
%are therefore independent of one another, the recurrence exhibits optimal
%substructure, and $\texttt{Opt}$ can be evaluated bottom-up (or memoized),
%visiting each (operator, tiling) pair only once. Second, 

To fill the
table for an operator, it is enough to compute, once per spec,
\[
B(\mathcal{J}) = \min_{T_1, \ldots, T_n}
\Big[ \sum_{k=1}^{n} \texttt{Opt}(\texttt{Op}_k, T_k)
+ \texttt{Cost}(\mathcal{J}, T_1, \ldots, T_n) \Big],
\]
after which $\texttt{Opt}(\texttt{Op}, T)$ is the minimum of
$B(\mathcal{J})$ over the specs with $\texttt{out}(\mathcal{J}) = T$.
Since the table for \texttt{Op} has one entry per distinct 
tiling, the number of entries is at most the number of specs. If each abstract operator has at
most $s$ join-agg specs and at most $n$ inputs, so that each tensor has
at most $t \le s$ candidate tilings, the total running time is
$O(\sum_{\texttt{Op}} s \cdot t^{n})$, where the $t^{n}$ factor comes
from enumerating combinations of input tilings.

Algorithm~\ref{alg:logical-opt} gives the complete procedure. It
processes the operators bottom-up, computing $B(\mathcal{J})$ for each
spec. For each output tiling, it records the cheapest join-agg spec that
induces it, together with the input tilings that achieved the minimum.
A final backwards pass walks the tree from the designated output.

\begin{algorithm}[h]
\caption{Logical optimization}
\label{alg:logical-opt}
\begin{algorithmic}[1]
\Require a computation tree whose leaves are input tensors and whose output is produced by $\texttt{Op}^{*}$, a processor
count $p$
\Ensure a join-agg spec $\mathit{spec}[\texttt{Op}]$ for every abstract
operator \texttt{Op}
\For{each input tensor $\mathbf{X}$ of the computation}
  \State $\mathit{cand}[\mathbf{X}] \gets \{\bot\}$;\quad
         $\texttt{Opt}[\mathbf{X}, \bot] \gets 0$
\EndFor
\For{each operator \texttt{Op}, in bottom-up (topological)
     order}
  \State let $\texttt{Op}_1, \ldots, \texttt{Op}_n$ be the producers of
         the $n$ inputs of \texttt{Op}
  \State $\mathit{cand}[\texttt{Op}] \gets \varnothing$
  \For{each $\mathcal{J} \in \texttt{specs}(\texttt{Op})$}
    \State choose $(T_1^{\star}, \ldots, T_n^{\star}) \in
           \mathit{cand}[\texttt{Op}_1] \times \cdots \times
           \mathit{cand}[\texttt{Op}_n]$ to
    \Statex \hspace{\algorithmicindent}\hspace{\algorithmicindent}
           min $B \gets \sum_{k=1}^{n} \texttt{Opt}[\texttt{Op}_k, T_k]
           + \texttt{Cost}(\mathcal{J}, T_1, \ldots, T_n)$
    \State $T \gets \texttt{out}(\mathcal{J})$
    \If{$T \notin \mathit{cand}[\texttt{Op}]$ \textbf{ or }
        $B < \texttt{Opt}[\texttt{Op}, T]$}
      \State $\mathit{cand}[\texttt{Op}] \gets
             \mathit{cand}[\texttt{Op}] \cup \{T\}$
      \State $\texttt{Opt}[\texttt{Op}, T] \gets B$
      \State $\mathit{choice}[\texttt{Op}, T] \gets
             (\mathcal{J}, T_1^{\star}, \ldots, T_n^{\star})$
    \EndIf
  \EndFor
\EndFor
\State $T^{\star} \gets$ the $T \in \mathit{cand}[\texttt{Op}^{*}]$
       minimizing $\texttt{Opt}[\texttt{Op}^{*}, T]$
\State \Call{Assign}{$\texttt{Op}^{*}, T^{\star}$}
\Statex
\Procedure{Assign}{$\texttt{Op}, T$}
  \State $(\mathcal{J}, T_1, \ldots, T_n) \gets
         \mathit{choice}[\texttt{Op}, T]$
  \State $\mathit{spec}[\texttt{Op}] \gets \mathcal{J}$
  \For{$k = 1, \ldots, n$}
    \If{the $k$th input of \texttt{Op} is produced $\texttt{Op}_k$}
      \State \Call{Assign}{$\texttt{Op}_k, T_k$}
    \EndIf
  \EndFor
\EndProcedure
\end{algorithmic}
\end{algorithm}

\subsection{Cost Model}
\label{sec:cost-model}

Our implementation of $\texttt{Cost}(\mathcal{J}, T_1, \ldots, T_n)$ does not take into account computation cost, as for every abstract operator available in the standard toolkit, the computational cost of implementing each join-agg spec for the operator is the same in terms of the number of floating point operations required. As Einsummable requests decompositions into near-equal intervals, the candidate join-agg specs are load-balanced and so the computation time is effectively the same across the candidate specs.

However, different join-agg specs can have very different communication costs. Thus \texttt{Cost} computes a proxy for communication, just as a logical query optimizer in a relational database system costs candidate plans using a proxy such as the number of intermediate tuples produced, deferring precise, hardware-aware costing to physical optimization.  Our \texttt{Cost} function estimates the number of floating point numbers that may need to be transferred to run the join-agg spec $\mathcal{J}$, ignoring which processor each block of data actually resides on; data placement is considered during physical optimization.  Assuming that each abstract operator produces a single output, the estimate includes three separate sub-costs:

\begin{enumerate}
    \item \textbf{Repartition.}  Consider the $k$th input tensor to the operator in question, with type $\mathcal{T}^{(i_k)}$, and let $\texttt{in}_k(\mathcal{J})$ denote the set of distinct $k$th input types appearing in the function specifications of $\mathcal{J}$.  If $\texttt{in}_k(\mathcal{J}) \neq T_k$, then add $|\mathcal{I}(\mathcal{T}^{(i_k)})|$ to the overall cost.  Intuitively: if we have to repartition the tensor, we may have to move every floating point number in the tensor around the network; there are $|\mathcal{I}(\mathcal{T}^{(i_k)})|$ such numbers. Note that if the $k$th input is an input to the computation the tensor is materialized directly into the blocks required by $\mathcal{J}$, and this sub-cost is zero.

    \item \textbf{Join.}  Consider each function specification in $\mathcal{J}$ of the form $\hat{\mathcal{T}}^{(i_1)} \times \hat{\mathcal{T}}^{(i_2)} \times \cdots \rightarrow
\hat{\mathcal{T}}^{(o)}$.  Simply add $\sum_k |\mathcal{I}(\hat{\mathcal{T}}^{(i_k)})|$ to the overall cost.  Intuitively: we have to form every $n$-tuple that is the input to some invocation of the concrete operator $f$, and this counts the total number of floating point numbers in those $n$-tuples.  This penalizes join-agg specs that require large inputs to $f$.

\item \textbf{Aggregation.} Consider each distinct type
$\hat{\mathcal{T}}^{(o)}$ appearing as an output type in $\mathcal{J}$.  Count the number of times this distinct type appears in $\mathcal{J}$; let this be $m$.  Add $(m - 1)|\mathcal{I}(\hat{\mathcal{T}}^{(o)})|$ to the overall cost.  Intuitively: these $m$ copies need to be aggregated down to one copy, which (naively) requires moving $(m-1)$ copies to the location of the last copy, and this counts the total number of floating point numbers moved. 

\end{enumerate}

\subsection{Detailed Example: 3D Matrix Multiply}

\label{sec:3d-matmul}
We now use this machinery end to end, and show that the optimizer
recovers the classical ``3D'' matrix-multiplication
algorithm~\cite{dekel1981parallel,agarwal1995threed}, which is
communication-optimal for square
matrices~\cite{irony2004communication}.

Consider a computation consisting of a single matrix multiply
(Example~\ref{ex:matmul-signature}) with
$\texttt{i} = \texttt{j} = \texttt{k} = N$, whose two input matrices are
inputs to the computation, to be run on $p$ processors; for simplicity,
assume $p = q^{3}$ for an integer $q$, and that the interval counts under
consideration divide $N$. As in
Example~\ref{ex:matmul-sharding}, every candidate join-agg spec for this
operator decomposes \texttt{i}, \texttt{j}, and \texttt{k} into $a$, $b$,
and $c$ equal intervals with $abc = p$, and contains one function
specification per cell of the resulting $a \times b \times c$ grid. The
specification for a given cell multiplies an $(N/a) \times (N/b)$ block
of the first input by an $(N/b) \times (N/c)$ block of the second
producing a partial result for an $(N/a) \times (N/c)$ block of the
output.

We evaluate $\texttt{Cost}$ on such a spec. Sub-cost (1) is zero as both
inputs are inputs to the computation. For sub-cost (2) each of the $abc$ function
specifications has input blocks totaling
$\frac{N^{2}}{ab} + \frac{N^{2}}{bc}$ numbers, for a total of
$abc \left( \frac{N^{2}}{ab} + \frac{N^{2}}{bc} \right) = (a + c) N^{2}$. Intuitively, each element of the first input is replicated $c$ times, and
each element of the second input $a$ times. For sub-cost (3), the
distinct output types are the $ac$ blocks of an $a \times c$ tiling of
the output, and each appears $m = b$ times, contributing
$(b - 1) N^{2}$. Hence
\[
\texttt{Cost} \;=\; (a + c)\, N^{2} + (b - 1)\, N^{2}
\;=\; (a + b + c - 1)\, N^{2}.
\]
By the arithmetic/geometric mean inequality,
$a + b + c \ge 3 \sqrt[3]{abc} = 3q$, with equality exactly when
$a = b = c = q$. The optimizer therefore selects the
$q \times q \times q$ grid, at a cost of
$(3q - 1) N^{2} = (3 p^{1/3} - 1) N^{2}$ (each processor multiplies an
$(N/q) \times (N/q)$ block of the first input by an
$(N/q) \times (N/q)$ block of the second, and the $q$ partial results
for each output block are summed). This is precisely the 3D
matrix-multiplication algorithm~\cite{agarwal1995threed}.

%Finally, note that in this example, the proxy nature of \texttt{Cost} is
%harmless. Sub-cost (2) charges every input block, including blocks that,
%under a fortunate placement, would already reside on the processor that
%consumes them. Under the best possible placement, the true transfer
%volume is $(c - 1) N^{2} + (a - 1) N^{2} + (b - 1) N^{2}$ --- exactly
%$2 N^{2}$ less than $\texttt{Cost}$, a difference that does not depend on
%$(a, b, c)$. The proxy therefore ranks all candidate specs in the same
%order as the true transfer volume, and the $q \times q \times q$ grid is
%the true minimizer as well.

\subsection{Edge Cases}
\label{sec:edge-cases}

The discussion so far assumes that each abstract operator has one
output, consumed by one operator. Three extensions cover the cases
that arise in practice. If an operator has several outputs that all
flow to the same consumer, \texttt{Opt} is indexed by a \emph{tuple}
of tilings, one per output, and the consumer selects a single table
entry of that producer rather than choosing the tilings independently;
the graph remains a tree and optimal substructure is preserved. If an
output has several consumers, the computation is no longer a tree, and
an exact algorithm would have to track the joint demands on every
shared operator, a state space that grows exponentially with the
amount of sharing. Instead, the graph is cut at every edge out of a
multiply-consumed tensor, and the resulting trees are optimized with
Algorithm~\ref{alg:logical-opt} one at a time in decreasing order of size, each subject to the join-agg specs
already fixed (a tensor produced in an already-optimized tree enters
the recurrence like a computation input, except that \texttt{Cost}
charges any repartitioning). Finally, because the per-operator work is
$O(s \cdot t^{n})$ in the number of inputs $n$, our implementation
applies the same tactic to linear paths of operators: each operator on
a path draws at most one input from within the path, so the inner
minimization in $B(\mathcal{J})$ ranges over a single producer's
tilings and the work drops to $O(s \cdot t)$. In both cases the plan
is optimal within each piece, though not necessarily globally.

%% physical_optimization.tex
%% Supersedes exchange_programs.tex.
%% Requires the same preamble as the rest of the paper:
%%   \usepackage{amsmath, amssymb, amsthm}
%%   \theoremstyle{definition}
%%   \newtheorem{definition}{Definition}
%%   \newtheorem{example}[definition]{Example}
%%   \newtheorem{theorem}[definition]{Theorem}
%% Programs are set in verbatim; no additional packages are needed.

\section{Physical Optimization}
\label{sec:physical}

\subsection{Overview}
\label{sec:physical-overview}

Einsummable's physical optimizer accepts the output of
the logical optimizer (a DAG of abstract operators,
each labeled with a selected join-agg spec) and produces an
executable artifact. In the current version of Einsummable, this
artifact is an NVIDIA CUDA graph, in which each vertex is a GPU kernel
or a network operation, which can be executed directly on an NVIDIA
server.

Physical optimization proceeds through the graph of abstract operators
in topologically sorted order, so that when an operator \texttt{op} is
reached, every operator supplying one of its inputs has already been
compiled. Each producer is compiled \emph{through its join} so that when \texttt{op} is reached, the unaggregated
partial blocks of each of \texttt{op}'s inputs are pinned to specific
GPUs. It is \texttt{op}'s job to aggregate those partial blocks as it
brings them where they are needed.

To do so, the compiler generates the \emph{preamble} of a program in a
domain-specific language called \texttt{ExProg} (for \emph{exchange
program}). This includes an \texttt{IN} declaration recording the location and size
of every partial block, and an \texttt{OUT} declaration, derived from
\texttt{op}'s join-agg spec, requiring each aggregated input block at
the sites that will run \texttt{op}'s concrete operator. The sites in
the \texttt{OUT} declaration are expressed as partitions declared by
size, so that the placement of \texttt{op}'s $p$ concrete-operator
instances is itself chosen during synthesis. Exchange programs are a
topology-aware generalization of Volcano's exchange
operator~\cite{graefe1990encapsulation}. As with the exchange operator, an
exchange program is where repartitioning, replication, and inter-processor
parallelism are visible. Unlike the exchange operator, an exchange program also
performs aggregation, and it is written against an
explicit model of the hardware topology.

Given the preamble, the body of the exchange program is synthesized so
as to minimize the execution time predicted by a hardware simulator.
The compiler uses the synthesized program and its simulated
execution trace (fixing the dependency structure among transfers,
aggregations, and compute kernels) to create the portion of the
CUDA graph that links \texttt{op} to its inputs and invokes each
of the $p$ instances of the concrete operator $f$ implementing
\texttt{op}.

\subsection{The \texttt{ExProg} Language}
\label{sec:exprog}

\paragraph{State.} An exchange program manipulates named
\emph{atoms} which are tensors, each with a size (in GB). The state of the
machine at any instant is a set of pairs $(\textit{aggregate},
\textit{site})$, where an aggregate is a $\oplus$-combination of atoms,
written \texttt{A+B+C}, and a site is a GPU. Because $\oplus$ is
associative and commutative, \texttt{A+B} and \texttt{B+A} denote the
same aggregate; the symbol \texttt{+} denotes the aggregation operation
$\oplus$ of the operator at hand, which need not be addition. A program
declares an initial state and a required final state, and its body must
transform the former into the latter.

\begin{itemize}
\item \texttt{GPUS:} lists the valid sites, and \texttt{PARTITIONS:}
optionally names groups of sites; the groups must be disjoint and cover
the declared GPUs. A partition may
also be declared by size alone, as in \texttt{G1 \{size 4\}}, in which
case the compiler chooses its membership, and the final state must hold for some choice of membership.
\item \texttt{IN:} places each atom at a site, optionally with
a size, e.g., \texttt{IN: A @ 0 (size: 1.0 GB)} (sizes default
to 1 GB).
\item \texttt{OUT:} lists the aggregates that must exist, and at which
sites or partitions, e.g., \texttt{OUT: A+B @ 0, 1}.
\end{itemize}

\paragraph{Actions.} The body of a program is a set of \emph{actions},
each naming an aggregate, the site at which it is formed, and an
optional destination list. \texttt{A+B @ 0 -> 1} aggregates \texttt{A}
and \texttt{B} at site 0 and sends the result to site 1.
\texttt{A @ 0 -> 1, 2, 3} multicasts \texttt{A} to three sites.
\texttt{A+B @ 0} aggregates locally with no network traffic. \texttt{A+B @ G1} aggregates at every site of partition \texttt{G1}
simultaneously. When convenient, we write several actions on one
line, separated by semicolons. There are three key rules. First, \emph{a program is a set, not a sequence}. An action
becomes eligible the moment every atom of its aggregate is present at
its source, and all eligible actions proceed concurrently. Intermediate states emerge from dataflow, not program
order. Second, every action with an arrow is a two-stage pipeline: a
local aggregation (reads and writes at HBM bandwidth), followed by a
transfer across the link. Third, aggregating at the source
materializes the aggregate there, satisfying the final-state
requirement at that site, and other actions may reuse it.

\paragraph{Cost.} A hardware simulator assigns each program a completion
time. DGX-A100 and DGX-H100 machines are modeled as fully non-blocking
all-to-all NVSwitch fabrics in which the only contention is at
endpoints. A multicast to $k$ destinations divides the source's egress
bandwidth by $k$. The DGX-V100 is far more restrictive as its eight GPUs
form a hybrid cube-mesh with 50 GB/s double NVLinks on some pairs,
25 GB/s single links on others, and no direct link on the rest
so transfers between unlinked pairs fall back to a PCIe bridge with
15.75 GB/s of total capacity shared by all such transfers.

\begin{example}[A peer-to-peer swap]
\label{ex:exchange-swap}
Two GPUs each hold one atom, and both must end up with the aggregate:
\begin{verbatim}
IN:  A @ 0; B @ 1
OUT: A+B @ 0, 1

A @ 0 -> 1;  B @ 1 -> 0;  A+B @ 0, 1
\end{verbatim}
Both transfers are eligible immediately and proceed concurrently and when complete, the aggregations run. On a DGX-H100 with 1 GB
atoms, the simulator reports completion in roughly 3.1 milliseconds with
2.2 ms for the (overlapping) transfers, and the remainder for the local
aggregations.
\end{example}

\begin{example}[Recursive doubling]
\label{ex:exchange-doubling}
Four GPUs each hold one atom, and all four must obtain the total
aggregate:
\begin{verbatim}
IN:  A @ 0; B @ 1; C @ 2; D @ 3
OUT: A+B+C+D @ 0, 1, 2, 3

A @ 0 -> 1;    B @ 1 -> 0
C @ 2 -> 3;    D @ 3 -> 2
A+B @ 0 -> 2;  A+B @ 1 -> 3
C+D @ 2 -> 0;  C+D @ 3 -> 1
A+B+C+D @ 0, 1, 2, 3
\end{verbatim}
The first four actions are eligible at time zero. The second group
becomes eligible as its operands arrive. Partial aggregates cross the network in
the second round, halving the traffic relative to gathering all four
atoms everywhere. On a DGX-H100 with 1 GB atoms, the simulator reports
completion in 6.2 milliseconds.
\end{example}

\begin{example}[Topology-aware placement on a DGX-V100]
\label{ex:exchange-v100}
The following specification leaves placement to the compiler: the two
partitions are declared by size alone, so the final state must hold
only for some choice of two four-GPU groups.
\begin{verbatim}
GPUS: 0, 1, 2, 3, 4, 5, 6, 7
PARTITIONS: G1 {size 4} G2 {size 4}
IN:  A @ 0; B @ 1; C @ 2; D @ 3;
     E @ 4; F @ 5; G @ 6; H @ 7
OUT: A+B+E+F @ G1; C+D+G+H @ G2
\end{verbatim}
On a DGX-V100, the compiler chooses $\texttt{G1} = \{0, 1, 4, 5\}$ and
$\texttt{G2} = \{2, 3, 6, 7\}$, aligning each group with the machine's
50 GB/s double-links ($0{\leftrightarrow}4$,
$1{\leftrightarrow}5$, $2{\leftrightarrow}6$, $3{\leftrightarrow}7$),
and emits:
\begin{verbatim}
A @ 0 -> 1;    B @ 1 -> 0;    C @ 2 -> 3
D @ 3 -> 2;    E @ 4 -> 5;    F @ 5 -> 4
G @ 6 -> 7;    H @ 7 -> 6
A+B @ 0 -> 4;  A+B @ 1 -> 5;  C+D @ 2 -> 6
C+D @ 3 -> 7;  E+F @ 4 -> 0;  E+F @ 5 -> 1
G+H @ 6 -> 2;  G+H @ 7 -> 3
A+B+E+F @ G1;  C+D+G+H @ G2
\end{verbatim}
The first round of swaps
uses the 25 GB/s single links within each CPU socket. The second round
crosses the socket boundary on the 50 GB/s pillars and no
transfer uses the 15.75 GB/s PCIe bridge. The simulator
reports completion in roughly 67 milliseconds. 
\end{example}

\subsection{From Join-Agg Specs to Exchange Programs}
\label{sec:preamble-gen}

As Einsummable's physical optimizer proceeds through the directed,
acyclic graph of abstract operators, for a given abstract operator
\texttt{op}, the optimizer must produce the
preamble for the exchange program used to physically implement
\texttt{op}. To do this, the compiler must determine the atoms that the program will operate over.
These atoms represent the finest pieces of data that the exchange
program can possibly handle so that the exchange program
can do its work by aggregating and routing whole atoms.  Splitting and concatenation happen
only just before or just after the compiled program is executed, in kernel calls generated by
the compiler.

Consider the $k$th input of \texttt{op}, with tensor type
$\mathcal{T}^{(i_k)}$ produced by an already-compiled operator whose
join-agg spec is $\mathcal{J}_k$. Two families of blocks must be considered.
First, the distinct output types of $\mathcal{J}_k$ tile
$\mathcal{T}^{(i_k)}$, and one copy
of each such block per function specification of $\mathcal{J}_k$ that
lists it exists somewhere on a GPU. Second,
\texttt{op}'s join-agg spec $\mathcal{J}$ requires the blocks
$\texttt{in}_k(\mathcal{J})$. These must be delivered, fully
aggregated, to the GPU of every function specification of
$\mathcal{J}$ consuming it. To ensure that no blocks need to be split during the exchange, the compiler forms the \emph{overlay} of
the two. In each dimension, it collects the lower and upper bounds, in
that dimension, of every block in both families. The resulting grid of
tensor types tiles $\mathcal{T}^{(i_k)}$ and its cells are the
atoms for input $k$. Every physically existing
block and every required block is a disjoint union of atoms. 

The compiler brackets the exchange program with local kernel calls.
Before the program runs \texttt{decompose} kernels slice each resident
copy of a producer output block into copies of the atoms it contains. After the program runs, \texttt{recompose} kernels
concatenate the aggregated atoms delivered to each GPU into the
contiguous input blocks expected by the concrete operator $f$.

Given this, the \texttt{IN} declaration lists
one \texttt{ExProg} atom per (atom, copy) pair at the GPU holding
that copy, with size equal to the cardinality of the atom's index set
times the width of the datatype. The \texttt{PARTITIONS} declaration
creates one partition of size one per function specification of
\texttt{op}.  The \texttt{OUT}
declaration requires, for each atom, the $\oplus$-aggregate of its $m$
copies at the partition of every function specification of
\texttt{op} whose input blocks contain that atom. 

\begin{example}[Generating a preamble]
\label{ex:preamble-gen}
Suppose the producer is the matrix multiply of
Example~\ref{ex:matmul-join-agg} compiled with its four-specification
join-agg spec $\{\mathcal{F}_{m,n}\}$, and that the compiler assigned
$\mathcal{F}_{1,1}, \mathcal{F}_{1,2}, \mathcal{F}_{2,1},
\mathcal{F}_{2,2}$ to GPUs $0, 1, 2, 3$. Then GPUs 0 and 1 each hold
an unaggregated copy of the output block $\mathcal{C}_1$ (rows
$[0, 128)$) and
GPUs 2 and 3 each hold a copy of $\mathcal{C}_2$ (rows $[128, 256)$).
The consumer \texttt{op} is a row-wise softmax whose join-agg
spec decomposes \texttt{i} into four intervals of 64 rows, so that
$\texttt{in}_1(\mathcal{J})$ consists of the four blocks with rows
$[0, 64)$, $[64, 128)$, $[128, 192)$, $[192, 256)$, each consumed by
exactly one function specification. The overlay has row split points
$\{0, 64, 128, 192, 256\}$, giving four atoms. Each producer block is the union of two atoms (which we call
$\texttt{X}j$ and $\texttt{Y}j$ for the two copies of atom $j$). The
generated preamble is:
\begin{verbatim}
GPUS: 0, 1, 2, 3
PARTITIONS: S1 {size 1} S2 {size 1}
            S3 {size 1} S4 {size 1}
IN:  X1 @ 0; X2 @ 0; Y1 @ 1; Y2 @ 1;
     X3 @ 2; X4 @ 2; Y3 @ 3; Y4 @ 3
OUT: X1+Y1 @ S1; X2+Y2 @ S2;
     X3+Y3 @ S3; X4+Y4 @ S4
\end{verbatim}
Synthesis chooses
$\texttt{S1}, \ldots, \texttt{S4}$ and the program body.
\end{example}

\subsection{Synthesizing Program Bodies}
\label{sec:transformations}

Given a preamble 
we may synthesize the body in a number of ways. For example,  we may start with a trivial, valid program, and repeatedly
rewrite to a new valid program, using the simulator to cost each candidate.
Regardless of the algorithm that chooses the order of transformations applied, they must be sound (they never produce an invalid
program) and complete (every valid program is reachable, so in
particular a cost-optimal one is). We now define a set of sound and complete transformations.

\begin{definition}[Closure; validity]
\label{def:closure}
A \emph{state} is a set of pairs
$(X, s)$, where $X$ is an aggregate and $s$ is a site. The initial
state is $S_{\texttt{IN}} = \{ (\{\texttt{A}\}, s) :
\texttt{A @ } s \in \texttt{IN} \}$. An action
$X \texttt{ @ } s \texttt{ -> } D$ is \emph{enabled} by a state $S$ if
$S$ contains aggregates $Y_1, \ldots, Y_m$ at site $s$ that partition
$X$ (that is, $X = Y_1 \cup \cdots \cup Y_m$ with the $Y_i$ pairwise
disjoint). Firing it adds $(X, s)$ and $(X, d)$ for each $d \in D$.
The \emph{closure} $S^{*}(P)$ of a program $P$ is the least state that
contains $S_{\texttt{IN}}$ and is closed under firing the enabled
actions of $P$. $P$ is \emph{well-formed} if every action of $P$ is enabled by
$S^{*}(P)$, and \emph{valid} if it is well-formed and
$(X, s) \in S^{*}(P)$ for every requirement $X \texttt{ @ } s$ of
\texttt{OUT}.
\end{definition}

For partitions declared by size alone, a program is valid if some
choice of membership makes it valid; below we treat
the membership as fixed. A
\emph{transformation} is a rule that maps a program to a program, and
it is \emph{sound} if it maps every valid program to a valid program
for the same \texttt{IN} and \texttt{OUT}. Two primitive
transformations suffice for completeness.
$\textsc{Insert}(a)$ maps $P$ to $P \cup \{a\}$ and is permitted
whenever the result is well-formed. $\textsc{Delete}(a)$ maps $P$ to
$P \setminus \{a\}$ and is permitted whenever the result is valid.
Both are sound. Deletion is always sound due to the guard, and insertion because of monotonicity (adding an action can only grow the closure).

\vspace{-3 pt}
\begin{theorem}[Completeness]
\label{thm:completeness}
Let $P$ and $Q$ be valid programs for the same \texttt{IN} and
\texttt{OUT}. Then $P$ can be transformed into $Q$ by
$|Q \setminus P|$ insertions followed by $|P \setminus Q|$ deletions,
with every intermediate program valid.
\end{theorem}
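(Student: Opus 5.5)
The argument rests on one fact stated just before the theorem: closure is monotone in the program. If $P \subseteq P'$, then $S^{*}(P) \subseteq S^{*}(P')$. This holds because the closure is a least fixed point. Every action of $P$ enabled by a state is also an action of $P'$, and enabling is monotone in the state, since a partition $Y_1,\ldots,Y_m$ present in $S$ is still present in any superset. So $S^{*}(P')$ is closed under firing the actions of $P$ and contains $S_{\texttt{IN}}$, and hence contains $S^{*}(P)$. I will state and check this first.

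\emph{Insertion phase.} I enumerate $Q \setminus P = \{a_1,\ldots,a_r\}$ in any order and set $P_0 = P$ and $P_t = P_{t-1} \cup \{a_t\}$. I claim every $P_t$ is valid, so each $\textsc{Insert}(a_t)$ is permitted.

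For well-formedness, each action of $P_t$ lies either in $P$ or in $Q$. An action of $P$ is enabled by $S^{*}(P)$, because $P$ is well-formed, and $S^{*}(P) \subseteq S^{*}(P_t)$ by monotonicity. An action $a_j$ of $Q$ is enabled by $S^{*}(Q)$, because $Q$ is well-formed. Here I must be careful, since $Q \not\subseteq P_t$ in general, so $S^{*}(Q) \subseteq S^{*}(P_t)$ is not immediate.

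I expect this to be the main obstacle. The definition of well-formed asks that an action be enabled by the closure of \emph{its own} program, and an action of $Q$ might rely on aggregates produced only by actions of $Q$ not yet inserted. I have two candidate fixes.

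The first is to insert the actions of $Q\setminus P$ in firing order. Compute $S^{*}(Q)$ by a sequence of firings and list the actions of $Q$ in the order in which each is first enabled. Along this order, every action of $Q\setminus P$ is enabled by the closure of $P$ together with the $Q$-actions preceding it. The reason is that the aggregates it needs were produced from $S_{\texttt{IN}}$ by earlier $Q$-actions, all of which are already in $P_{t-1}$.

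To make the ordering airtight, I will argue as follows. The least fixed point $S^{*}(Q)$ equals the union of the stages $S_0 = S_{\texttt{IN}}$ and $S_{k+1} = S_k \cup \{\text{results of actions of } Q \text{ enabled by } S_k\}$. Define the rank of an action of $Q$ as the least $k$ at which it is enabled by $S_k$. Insert the actions in nondecreasing rank. Then induct on $k$ to show $S_k \subseteq S^{*}(P_{t})$ once all actions of rank below $k$ have been inserted. This shows each newly inserted action is enabled in $S^{*}(P_t)$. With that ordering, every intermediate program is well-formed.

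Validity of $P_t$ also holds, because $S^{*}(P) \subseteq S^{*}(P_t)$ and $P$ already satisfies \texttt{OUT}. The insertion phase ends at $P_r = P \cup Q$, which is valid.

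\emph{Deletion phase.} Here the difficulty is symmetric: deleting an action of $P$ may strand a later action of $P\setminus Q$ that depended on it. To avoid this, I will delete $P \setminus Q$ in \emph{reverse} order of rank with respect to $S^{*}(P \cup Q)$, so that dependents are removed before the actions they depend on.

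For validity of each intermediate program $R$ with $Q \subseteq R \subseteq P\cup Q$, I use the sandwich $S^{*}(Q) \subseteq S^{*}(R) \subseteq S^{*}(P\cup Q)$. \texttt{OUT} holds because of the left inclusion. Every action of $Q$ is enabled because of the same left inclusion together with well-formedness of $Q$.

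For a remaining action $b \in R \setminus Q$, I show it is enabled by $S^{*}(R)$. Since deletion order is reverse rank, every action of $P\setminus Q$ with smaller rank than $b$ is still in $R$, and every action of $Q$ is in $R$. So the same stage-wise induction, restricted to actions of rank below that of $b$, gives $S_{\mathrm{rank}(b)} \subseteq S^{*}(R)$, and hence $b$ is enabled.

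Thus every $\textsc{Delete}$ is permitted, and after $|P\setminus Q|$ deletions we reach $Q$, with $|Q\setminus P|$ insertions before them, which is the count claimed in Theorem~\ref{thm:completeness}. The rank-ordering construction and its stage-wise induction is the step I would write out most carefully; the rest follows from monotonicity.
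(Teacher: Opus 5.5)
Your proposal is correct and follows the paper's proof. The paper inserts $Q \setminus P$ in a firing order of $Q$ and deletes $P \setminus Q$ latest-firing first, relying on monotonicity of the closure. Your rank function on the stages of $S^{*}(Q)$ and of $S^{*}(P \cup Q)$, with its stage-wise induction, makes the paper's informal notion of ``firing order'' precise.
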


\vspace{-7 pt}
\begin{proof}
Insert the actions of $Q \setminus P$ in a firing order of $Q$;
enabledness is monotone in the program, so each insertion is
permitted, and this reaches $P \cup Q$. Then delete the actions of
$P \setminus Q$ latest-firing first: every intermediate contains
$Q$, so \texttt{OUT} stays satisfied, and every remaining action is
enabled by actions that fire strictly earlier and are still present.
\end{proof}

\vspace{-5 pt}
Since a valid program always exists whenever \texttt{OUT} is
satisfiable at all (gather the atoms of each required aggregate at each
of its target sites and aggregate locally), the theorem implies that all valid
programs are reachable from any valid starting program. 

\textsc{Insert} and \textsc{Delete} are used merely to argue completeness and no reasonable search strategy would use them. We give a more reasonable set of rules below.  Note that these are not unconditional rewrites.
Deleting an action also deletes the materializations it produced, and
those materializations may be needed elsewhere, by an \texttt{OUT}
requirement, or to enable another action.

\begin{itemize}
\item \emph{Multicast fission and fusion:}
\texttt{X @ s ->} $D_1 \cup D_2 \leftrightarrow$ the
pair \texttt{X @ s ->} $D_1$ and
\texttt{X @ s ->} $D_2$, trading egress contention
against action count. Both directions leave the closure unchanged, so
both are always permitted.
\item \emph{Relay introduction and elimination:}
\texttt{X @ s -> d} $\leftrightarrow$
\texttt{X @ s -> r -> d}. This is a ``bounce''
move that routes around slow or contended links. Introduction only
adds a materialization at \texttt{r} and is always permitted;
elimination is permitted only when \texttt{X} at \texttt{r} is not
otherwise required.
\item \emph{Aggregation pushdown and pullup:} the pair
\texttt{X @ r -> s}, \texttt{Y @ r -> s}
$\leftrightarrow$ \texttt{X+Y @ r -> s}. This is the
network analogue of eager versus lazy partial
aggregation~\cite{yan1995eager}. Both directions are guarded:
pushdown deletes the separate materializations of \texttt{X} and
\texttt{Y} at \texttt{s}, while pullup deletes that of \texttt{X+Y}
(a local aggregation \texttt{X+Y @ s} may be inserted to restore
it).
\item \emph{Source substitution:} redirect a transfer of \texttt{X}
to draw from another site at which \texttt{X} is
materialized. Guarded: the redirected transfer no longer materializes
\texttt{X} at the original source.
%\item \emph{Site renaming:} apply a permutation of \texttt{GPUS} to
%the sites of the program body and to the memberships of size-declared
%partitions, guarded by the requirement that the permutation fix the
%site of every \texttt{IN} atom and of every \texttt{OUT} requirement
%at a concrete site. Under this guard the closure renames along with
%the program, so validity is preserved.
\end{itemize}

In our implementation, Definition~\ref{def:closure} ensures the validity of every candidate exchange program that the search
procedure proposes, and Theorem~\ref{thm:completeness}
guarantees the space of valid programs being searched contains a
cost-optimal program.

\section{Experimental Evaluation}
\label{sec:experiments}

Our evaluation asks three questions. First, can our
automatic system meet the performance
of mature systems expertly tuned? Second, how effective is Einsummable's optimization: how important is cost-based selection of
join-agg specs (Section~\ref{sec:logical}), and how much does the synthesis of exchange programs matter
(Section~\ref{sec:physical})? Third, is the communication-cost proxy
of Section~\ref{sec:cost-model} reasonable?

Einsummable is implemented in C++ using CUDA. Concrete operators
are high-performance kernels: cuBLAS matrix multiplies,
FlashAttention-style attention kernels, and Triton-generated
kernels for elementwise operations, aggregation, and the
decomposition/recomposition of atoms
(Section~\ref{sec:preamble-gen}). Every plan is compiled to a
CUDA graph and executed directly. Exchange-program
 synthesis uses an off-the-shelf commercial LLM to propose candidate
programs. Each candidate is checked for validity
(Definition~\ref{def:closure}), costed by the hardware simulator, and
the cheapest is kept. 

Experiments are run on two servers. Most experiments use an NVIDIA
DGX A100 with eight 40 GB A100 GPUs connected by a fully non-blocking
NVSwitch. Experiment 4 additionally uses an NVIDIA DGX V100
with eight 32 GB V100 GPUs, whose hybrid cube-mesh topology
(Section~\ref{sec:exprog}) is challenging. All computations
run in FP16 and are executed with two warm-up rounds followed by ten
timed rounds. 
Figures report mean runtimes in milliseconds with 95\% confidence
intervals (which are vanishingly small).

Two workload families are used throughout. The first is a
LLaMA-scale transformer block \cite{grattafiori2024llama}: model
dimension 4,096, thirty-two query heads and eight key/value heads
(grouped-query attention, as in Example~\ref{ex:flash-sharding}) of
head dimension 128, and FFN dimension 14,336, with attention computed
by fused FlashAttention-style kernels. The block is evaluated on five
token workloads: single sequences of 1K, 8K, and 128K tokens, and
batches of eight and 128 sequences of 1K tokens each. The second
family is five fixed matrix chains, labeled A through E, designed so
that each multiply averages roughly $8192^3$ scalar multiplies while
stressing different plan structures: chains A and B have
\emph{bushy} optimal parenthesizations (three and one independent
sub-products, respectively, verified by the classical matrix-chain
dynamic program); chain C is a three-matrix chain whose optimal order
is right-associated, the opposite of what left-deep data parallelism
produces; and chains D and E have large end matrices multiplied
across thin interior dimensions, so that each plan is forced into a
single large terminal aggregation. The parenthesization is fixed and identical for every system. Writing
$M_i$ for the $i$th matrix of a chain, the parenthesizations are:
Chain~A, $(M_1 M_2)\,\big(((M_3 M_4)((M_5 M_6)(M_7 M_8)))\,M_9\big)$,
nine matrices with dimensions alternating 16,384 and 6,144 between
ends of 10,240; Chain~B, $(M_1 M_2)\,((M_3 M_4)\,M_5)$, five matrices
with dimensions alternating 20,480 and 5,120 between ends of 9,216;
Chain~C, $M_1\,(M_2 M_3)$, with dimensions
$10{,}240 \times 12{,}288 \times 32{,}768 \times 2{,}048$; Chain~D,
$M_1\,\big((M_2 (M_3 (M_4 (M_5 (M_6 (M_7 M_8))))))\,M_9\big)$, nine
matrices with ends of 24,576, necks of 1,536, and a 20,480-wide
interior; and Chain~E,
$M_1\,\big((M_2 (M_3 (\cdots (M_{11} M_{12})\cdots)))\,M_{13}\big)$,
thirteen matrices with ends of 49,152, necks of 1,024, and a
20,480-wide interior.

\begin{table}[t]

\caption{Pearson correlation between predicted communication cost
and measured runtime.}
\vspace{-10 pt}

\label{tab:cost-corr}
\begin{tabular}{lc@{\hspace{2em}}lc}
\toprule
Transformer workload & $r$ & Matrix chain & $r$ \\
\midrule
1 seq $\times$ 1K tokens & 0.72 & Chain A & 0.73 \\
1 seq $\times$ 8K tokens & 0.92 & Chain B & 0.81 \\
1 seq $\times$ 128K tokens & 0.80 & Chain C & 0.87 \\
8 seq $\times$ 1K tokens & 0.85 & Chain D & 0.47 \\
128 seq $\times$ 1K tokens & 0.75 & Chain E & 0.92 \\
\bottomrule
\end{tabular}
\vspace{-15 pt}
\end{table}

\subsection{Experiments Run}

\textbf{Experiment 1:} Comparing Einsummable with mature,
expertly-tuned systems on transformer workloads. We compare
against JAX \cite{bradbury2018jax} (XLA-compiled, with GSPMD sharding
annotations \cite{xu2021gspmd} and cuDNN fused attention), PyTorch
\cite{paszke2019pytorch} (NCCL communication, flash
scaled-dot-product attention kernels, weights replicated and
activations sharded by batch or by sequence, whichever is better for
the workload), and vLLM's FlashAttention-2 kernels
\cite{kwon2023vllm} driven directly. Each baseline was tuned by hand;
Einsummable is fully automatic. All four systems run all five token
workloads on one, four, and eight GPUs, except that JAX was unable to
complete the 128K single-sequence workload. Results for the four
larger workloads are given in Figure~\ref{fig:llama-systems}. On the
1K single-sequence workload, omitted from the figure for space, every
system finishes in 0.8--2.2 ms at eight GPUs (JAX 0.83, Einsummable
1.21, PyTorch 1.72, vLLM 2.19), and it is included in the geometric
means reported below.

\vspace{2 pt}
\noindent
\textbf{Experiment 2:} Comparing Einsummable with mature systems on
matrix-chain arithmetic. On chains A through E we compare against JAX
(JIT-compiled, row-sharded) and PyTorch (cuBLAS multiplies with NCCL
row sharding), again on one, four, and eight GPUs. Results are given
in Figure~\ref{fig:matmul-systems}.

\vspace{2 pt}
\noindent
\textbf{Experiment 3:} Testing the value of coordinated, cost-based
logical optimization. We run the transformer block and the matrix
chains on eight GPUs under three plan selectors, all with physical
optimization on: (a) the full logical optimizer of
Section~\ref{sec:logical} (labeled ``EinDecomp'' in the figures,
after the decomposition algorithm
\cite{DBLP:journals/pvldb/BourgeoisDJLATYYJ25} on which it is based), (b) a \emph{greedy} selector, which is the same dynamic
program with the repartition sub-cost removed (since repartitioning
is the only sub-cost that depends on the input tilings, this
decouples the recurrence and each operator independently chooses its
cheapest join-agg spec), and (c) \emph{random} selection, which chooses
uniformly among valid join-agg specs, with twenty independent draws
per workload. Results are given in
Figures~\ref{fig:llama-modes} and~\ref{fig:matmul-modes}.

\vspace{2 pt}
\noindent
\textbf{Experiment 4:} Testing the value of physical optimization on
a restrictive topology. We compare optimized exchange programs
against naive ones (each aggregated input block is simply gathered
and aggregated at its consuming sites, with no topology-aware
routing, relaying, or placement) for the transformer block and the
matrix chains on the DGX V100, which has a challenging network topology. Results are given in
Figures~\ref{fig:v100-llama} and~\ref{fig:v100-matmul}. For
calibration, we also ran the same comparison on the NVSwitch-based
DGX A100.

\vspace{2 pt}
\noindent
\textbf{Experiment 5:} Validating the communication-cost proxy. The
twenty random plans per workload from Experiment 3 give, for each of
the ten (workload, chain) setups, twenty (predicted cost, measured
runtime) pairs. Because FLOPs are invariant across
decompositions, runtime should be a linear-plus-noise function of
communication, so we report the Pearson correlation
(Table~\ref{tab:cost-corr}).

\subsection{Discussion}

Over all experiments, fully automatic Einsummable met or exceeded
the performance of every system tested. On the transformer block at
eight GPUs, Einsummable's geometric-mean runtime over the five
workloads is 8.97 ms, versus 13.65 ms for hand-tuned PyTorch and
14.87 ms for vLLM.

\begin{figure*}[t]
\centering
\includegraphics[width=\textwidth]{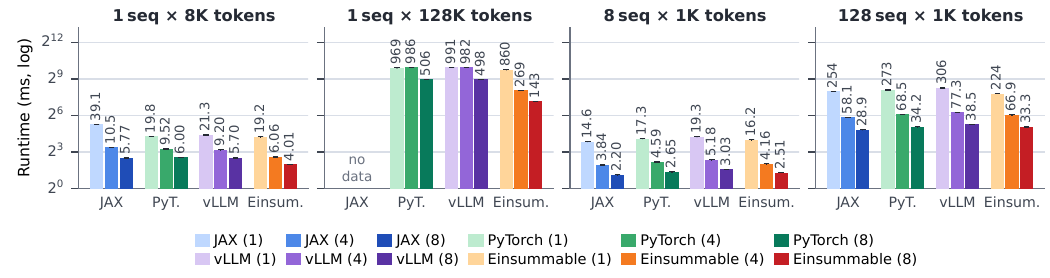}

\vspace{-10 pt}
\caption{Transformer-block mean runtimes (95\% CIs) for Einsummable
vs.\ JAX, PyTorch, and vLLM on one, four, and eight A100 GPUs
(parenthesized numbers in the legend), four LLM workloads.}
\vspace{-10 pt}
\label{fig:llama-systems}
\end{figure*}

\begin{figure*}[t]
\centering
\includegraphics[width=\textwidth]{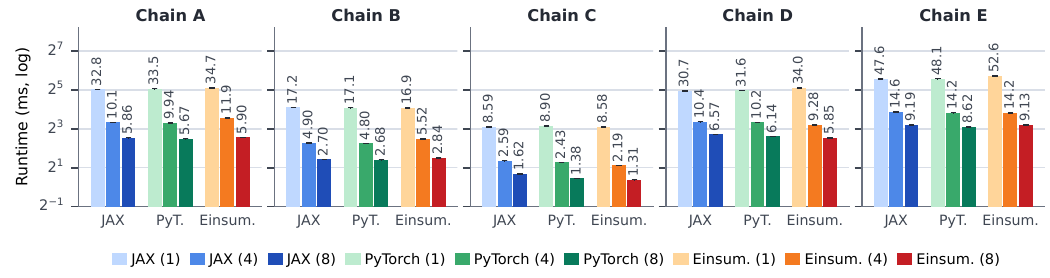}

\vspace{-10 pt}
\caption{Matrix-chain mean runtimes (95\% CIs) for Einsummable vs.\
JAX and PyTorch on one, four, and eight A100 GPUs (parenthesized
numbers in the legend), over chains A to E.}
\vspace{-10 pt}
\label{fig:matmul-systems}
\end{figure*}

\begin{figure*}[t]
\centering
\includegraphics[width=\textwidth]{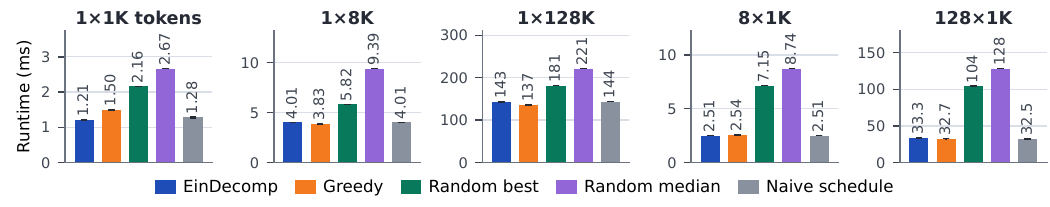}

\vspace{-10 pt}
\caption{Logical-optimization ablation on the transformer block
(eight A100 GPUs): the full optimizer vs.\ greedy, random, and
naive-schedule plan selection.}
\vspace{-10 pt}

\label{fig:llama-modes}
\end{figure*}

\begin{figure*}[t]
\centering
\includegraphics[width=\textwidth]{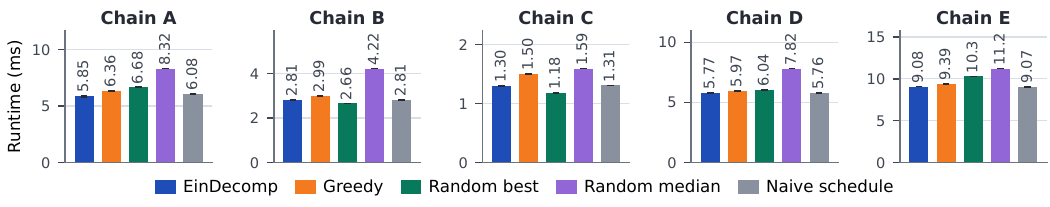}

\vspace{-10 pt}
\caption{Logical-optimization ablation on chains A-E (eight A100
GPUs).}
\vspace{-5 pt}

\label{fig:matmul-modes}
\end{figure*}

\begin{figure}[t]
\centering
\includegraphics[width=\columnwidth]{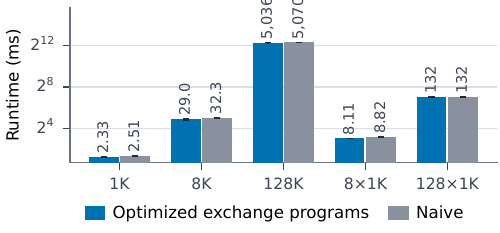}
\vspace{-20 pt}
\caption{Optimized vs.\ naive exchange programs for the transformer
block on the topology-restricted DGX V100.}
\vspace{-10 pt}

\label{fig:v100-llama}
\end{figure}

\begin{figure}[t]
\centering
\includegraphics[width=\columnwidth]{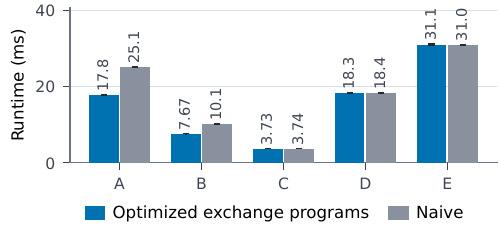}

\vspace{-10 pt}
\caption{Optimized vs.\ naive exchange programs, DGX V100.}
\vspace{-10 pt}

\label{fig:v100-matmul}
\end{figure}

In \textbf{Experiment 1}, the gap is biggest on the long-sequence
workloads. On the single 128K-token sequence, Einsummable requires
143 ms where PyTorch requires 506 ms and vLLM 498 ms (roughly
$3.5\times$ longer). The plan Einsummable selects for this workload
is instructive. A single sequence offers exactly one eight-way
decomposition of the attention operator. Shard \texttt{heads\_kv}, so
that each GPU computes attention for one key/value head and its four
query heads over the entire sequence. The token activations
($131{,}072 \times 4{,}096$, about 1 GB in FP16) dwarf every weight
matrix, so the optimizer shards the normalization, elementwise,
projection, and FFN operators by tokens with replicated weights, and
connects the token-sharded and head-sharded regions by repartitioning
the activations at the boundaries of the attention operator rather
than sharding the contraction of the output projection. This is the
hand-designed DeepSpeed-Ulysses scheme \cite{jacobs2023deepspeed} (sequence parallelism outside attention, head parallelism inside it,
joined by all-to-all exchanges) which Einsummable derives automatically. With a batch of one, the PyTorch implementation can only shard activations by sequence, so every GPU must
gather the full keys and values and absorb the causal-mask load
imbalance across sequence shards.

In \textbf{Experiment 2}, the three systems are effectively tied. At eight
GPUs, every per-chain difference is within about 6\%, and all three
scale by roughly $6\times$ from one to eight GPUs.
We believe that Einsummable's parity is actually a strong result. With no manual decomposition of any kind, it handles classical workloads as well as systems with many person-years of development effort do.

\textbf{Experiment 3} shows that random plan
selection is catastrophic on the transformer block. The median
random plan has a geometric-mean runtime of 22.8 ms against the
optimizer's 8.97 ms ($2.5\times$ worse), and even the best of twenty
draws is 17.6 ms. On
the matrix chains the penalty for a poor logical plan is smaller. The median random plan is
34\% worse in geometric mean, while the best of twenty draws
essentially matches the optimizer (within 3\%, occasionally beating
it on an individual chain). The greedy selector is more interesting. In geometric mean it costs only 2.3\% on the transformer block (and
on a few individual workloads it is slightly faster, within the
noise) but 7\% on the chains.  This may suggest that exchange programs are able to effectively mask the added cost associated with operator-to-operator repartitions.

\textbf{Experiment 4} confirms that physical optimization can be crucial. On the NVSwitch-based A100, where every pair of GPUs enjoys a
dedicated, non-blocking path, optimized and naive exchange programs
are statistically indistinguishable (geometric-mean ratio 1.006 on
the transformer block). With no topology to exploit and the logical
optimizer having already removed unnecessary repartitions, a naive solution does well and there is little improvement possible beyind the naive solution.  However, on the V100's
hybrid cube-mesh, optimized programs are up to
11\% faster on the transformer block (geometric mean 5.6\%), and
$1.41\times$ and $1.32\times$ faster on chains A and B, whose bushy
plans generate cross-socket traffic that the naive programs route through the shared PCIe bridge. On modern hierarchical networks (multi-node NVLink domains,
for example) we expect a similar advantage.

In \textbf{Experiment 5}, across the ten setups, the Pearson correlation
between predicted communication cost and measured runtime ranges
from 0.47 to 0.92, with eight of ten setups above 0.72
(Table~\ref{tab:cost-corr}). One source of inaccuracy seems to be that GPU kernels can have varying performance depending on input shapes even when the floating point operations are identical---modeling this should improve accuracy. Also, the point of the cost model is to select the most inexpensive plan, and we note that the selected
plan was never beaten by any sampled
alternative on the transformer block, indicating that the proxy still ranks plans reasonably.

Finally, a note on the choice of baselines. We compare against
expertly-tuned production systems rather than against prior
auto-parallelizing research systems. Alpa
\cite{zheng2022alpa} can no longer be built against a current JAX stack, and the systems that remain maintained, Unity
\cite{unger2022unity} and nnScaler \cite{lin2024nnscaler}, generate
plans for training rather than for the inference workloads studied here. 

\section{Related Work}
\label{sec:related}

\noindent
\textbf{Automatic parallelization of AI computations.} The named
intra-oper\-ator strategies in wide use, such as tensor (model) parallelism
\cite{shoeybi2019megatron}, sharded data parallelism
\cite{rajbhandari2020zero}, and sequence parallelism
\cite{korthikanti2023sequence}, were each designed by hand.
Auto-parallelizing systems search over combinations of such
strategies: FlexFlow \cite{jia2019flexflow} searches per-operator
parallelization configurations with a randomized strategy, GSPMD
\cite{xu2021gspmd} is a partitioning \emph{mechanism} that propagates
user-supplied sharding annotations through a computation graph and
Alpa \cite{zheng2022alpa} automates both inter-operator (pipeline) and
intra-operator parallelism, the latter via an integer program over
per-operator shardings of tensor dimensions onto a two-dimensional
logical device mesh. Unity \cite{unger2022unity} extends FlexFlow to jointly optimize
algebraic rewrites and parallelization for training.  nnScaler
\cite{lin2024nnscaler} goes furthest in our direction, letting each
operator declare its partitionable dimensions with an einsum-like
annotation; partitioning remains uniform per dimension, aggregation is
fixed to summation, and communication is pattern-matched onto NCCL
collectives. Einsummable's join-agg specs are instead produced by
operator code, admitting non-uniform, data-dependent decompositions
and arbitrary $\oplus$, and its communication is synthesized for
the topology.

\vspace{2 pt}
\noindent
\textbf{Collective communication.} A recent line of work synthesizes topology-aware
\emph{implementations} of the collectives themselves: SCCL
\cite{cai2021sccl} and TACCL \cite{shah2023taccl} search for optimal
algorithms realizing a given collective on a given interconnect, and
Blink \cite{wang2020blink} builds collectives from packing spanning
trees on heterogeneous topologies. Einsummable takes the further step
of having no fixed communication primitives at all. Each exchange
program is a special-purpose communication-and-aggregation pattern.

\vspace{2 pt}
\noindent
\textbf{Relations on tensor runtimes.} Some systems run relational
workloads on tensor infrastructure (TQP \cite{he2022tqp}, Hummingbird
\cite{nakandala2020hummingbird}, Koutsoukos et al.\
\cite{koutsoukos2021tensors}) or bring linear algebra and even
neural-network training into relational engines (SystemML
\cite{ghoting2011systemml}, SystemDS \cite{boehm2020systemds},
Sch{\"u}le et al.\ \cite{schule2024duck}); Einsummable works in the
other direction, using relational optimization for tensor workloads.

\vspace{2 pt}
\noindent
\textbf{Tensor-relational optimization.} Einsummable builds
directly on the tensor-relational model \cite{yuan2021tensor}, which
represents tensors as relations of (key, sub-tensor) pairs so that
kernels do enough work to run efficiently on accelerators, and on the
broader argument that relational systems are a sound substrate for
distributed ML \cite{jankov2019declarative,jankov2021distributed,
tang2023auto}. Our earlier EinDecomp system
\cite{DBLP:journals/pvldb/BourgeoisDJLATYYJ25} takes a first step,
selecting decompositions with a cost-based dynamic program, but only
for computations expressed in an extended Einstein-summation
notation (ruling out FlashAttention, for example). SPORES
\cite{wang2020spores} optimizes linear-algebra expressions via
relational equality saturation, STOREL \cite{schleich2023storel}
optimizes a tensor program given user-supplied storage formats, and
Galley \cite{deeds2025galley} brings cost-based query optimization to
sparse tensor programs. Unlike Einsummable, these target logical rewrites or single-node
kernel structure.

\vspace{2 pt}
\noindent
\textbf{Kernel compilers.} TVM \cite{chen2018tvm}, Triton
\cite{tillet2019triton}, and TACO \cite{kjolstad2017taco} generate
high-performance kernels for tensor operations on a single device.
This work is complementary. Einsummable consumes
kernels (cuBLAS matrix multiplies, FlashAttention
\cite{dao2022flashattention}) as well as kernels
produced by these compilers as its concrete operators.

\vspace{2 pt}
\noindent
\textbf{Parallel database systems.} Intra-operator, partitioned
parallelism with automatic orchestration is the founding idea of
shared-nothing parallel database systems \cite{dewitt1992parallel} and
of Volcano's exchange operator \cite{graefe1990encapsulation}, which
exchange programs generalize (Section~\ref{sec:physical}).

\section{Conclusions}

We have presented Einsummable, which automatically distributes an AI
computation across the GPUs of a server by treating every operation as
a relational join followed by an aggregation over tensor relations.
Operations export their decompositions as join-agg specs, which are selected to minimize
communication, and exchange programs synthesized against the hardware
topology realize the plan without canned collectives.

There is much work to do.  Extending the logical optimization phase so that it is aware of computation cost (and of device properties and how they interact with computational efficiency) would increase accuracy. Extending exchange programs to allow for multi-machine parallelism as well as incorporating learning (to understand how the simulator differs from reality) will be crucial.

\bibliographystyle{ACM-Reference-Format}
\bibliography{references}

\end{document}